\newcommand{\thickhline}{%
    \noalign {\ifnum 0=`}\fi \hrule height 1pt
    \futurelet \reserved@a \@xhline
}
\newcolumntype{"}{@{\hskip\tabcolsep\vrule width 1pt\hskip\tabcolsep}}
\pgfplotsset{compat=1.17}  % Adjust the version according to your pgfplots 
\newcommand{\PreserveBackslash}[1]{\let\temp=\\#1\let\\=\temp}
\newcolumntype{C}[1]{>{\PreserveBackslash\centering}p{#1}}
\newenvironment{varalgorithm}[1]
  {\algorithm}
  {\endalgorithm}
\newenvironment{list3}{
	\begin{list}{$\bullet$}{%
			\setlength{\itemsep}{0.05cm}
			\setlength{\labelsep}{0.2cm}
			\setlength{\labelwidth}{0.3cm}
			\setlength{\parsep}{0in} 
			\setlength{\parskip}{0in}
			\setlength{\topsep}{0in} 
			\setlength{\partopsep}{0in}
			\setlength{\leftmargin}{0.22in}}}
	{\end{list}}
\newenvironment{list4}{
	\begin{list}{$\bullet$}{%
			\setlength{\itemsep}{0.05cm}
			\setlength{\labelsep}{0.2cm}
			\setlength{\labelwidth}{0.3cm}
			\setlength{\parsep}{0in} 
			\setlength{\parskip}{0in}
			\setlength{\topsep}{0in} 
			\setlength{\partopsep}{0in}
			\setlength{\leftmargin}{0.16in}}}
	{\end{list}}
\let\mathbb=\mathds % I much prefer the dsfont over the bbfont
\newtheorem{prop}{Proposition}
\newtheorem{assum}{Assumption}
\newtheorem{example}{\bfseries Example}
\newtheorem{remark}{Remark}
\begin{document}

%\title{\LARGE \bf Optimal CPU Scheduling in Data Centers \\ via a Finite Time Communication Efficient Algorithm}
\title{\LARGE \bf HARQ-based Quantized Average Consensus \\ over Unreliable Directed Network Topologies}

\author{Neofytos Charalampous, Evagoras Makridis, Apostolos~I.~Rikos, and Themistoklis~Charalambous
\thanks{N. Charalambous, E. Makridis, and T. Charalambous are with the Department of Electrical and Computer Engineering, School of Engineering, University of Cyprus, 1678 Nicosia, Cyprus.  E-mails: {\tt~surname.name@ucy.ac.cy}. T. Charalambous is also a Visiting Professor at the Department of Electrical Engineering and Automation, School of Electrical Engineering, Aalto University, 02150 Espoo, Finland.}
\thanks{Apostolos~I.~Rikos is with the Artificial Intelligence Thrust of the Information Hub, The Hong Kong University of Science and Technology (Guangzhou), Guangzhou, China. He is also affiliated with the Department of Computer Science and Engineering, The Hong Kong University of Science and Technology, Clear Water Bay, Hong Kong. E-mail: {\tt~apostolosr@hkust-gz.edu.cn}.}
\thanks{The work of E. Makridis and T. Charalambous has been partly funded by MINERVA, a European Research Council (ERC) project funded under the European Union's Horizon 2022 research and innovation programme (Grant agreement No. 101044629).}
}

\maketitle
%\pagestyle{empty}

% ===============================================
%
%
% ABSTRACT
%
%
% ===============================================
\begin{abstract}
In this paper, we propose a distributed algorithm (herein called HARQ-QAC) that enables nodes to calculate the average of their initial states by exchanging quantized messages over a directed communication network. In our setting, we assume that our communication network consists of unreliable communication links (\emph{i.e.,} links suffering from packet drops). 
For countering link unreliability our algorithm leverages narrowband error-free feedback channels for acknowledging whether a packet transmission between nodes was successful. 
Additionally, we show that the feedback channels play a crucial role in enabling our algorithm to exhibit finite-time convergence. We analyze our algorithm and demonstrate its operation via an example, where we illustrate its operational advantages. Finally, simulations corroborate that our proposed algorithm converges to the average of the initial quantized values in a finite number of steps, despite the packet losses. This is the first quantized consensus algorithm in the literature that can handle packet losses and converge to the average. Additionally, the use of the retransmission mechanism allows for accelerating the convergence. 
\end{abstract}

\begin{keywords}
Distributed coordination, average consensus, quantized communication, ARQ feedback, packet drops.
\end{keywords} 

% ===============================================
%
%
% INTRODUCTION
%
%
% ===============================================
\section{Introduction}\label{sec:intro}

In today’s interconnected world, multi-agent systems are vital for managing networks efficiently. 
For this reason, there has been a surge in interest regarding the control and coordination of networks comprising multiple agents such as groups of sensors, %~\cite{2005:XiaoBoydLall,GUYEUX2020115}, %chen2014wireless,
smart grids, %~\cite{YIN2021117391,9272997}, 
social networks, %~\cite{6329411,2023:Fontan_Altafini_Signed}, %2020:Fontan_Altafini_Interval,
and mobile autonomous agents. % \cite{2007:olfati-saber_consensus,ZHOU2023111198}. 
One problem of particular interest over multi-agent systems is reaching \textit{consensus} in a distributed fashion. 
Consensus is the process of reaching an agreement among distributed agents to converge towards a common decision. 
Its applications span a wide area including distributed optimization \cite{NIPS2012_d240e3d3,7447011,2020:Nedich, 2023:Rikos_Johan_IFAC}, %6483403
distributed reinforcement learning~\cite{STANKOVIC2023110922}, cooperative control~\cite{8458142}, and networked control systems~\cite{7949013}. 
Additionally, enabling distributed computations among agents is essential for scalability, fault tolerance, and adaptability to dynamic environments \cite{nedic2018distributed}. 

One special case of consensus is distributed averaging over directed network topologies with bandwidth constraints and (probably) unreliable communication links. 
In distributed averaging nodes are initially endowed with a numerical state, and aim to calculate the average of their states. 
%Average consensus is an important problem and has received significant attention. 
Early efforts primarily rely on the assumption that communication channels of networks have unlimited capacity and nodes can exchange exchange real-valued messages; this problem has been extensively studied across various research directions; see, e.g.,  \cite{qin2016recent,2018:BOOK_Hadj} and references therein. Some examples include the presence of unreliable communication links in the network \cite{hadjicostis2015robust, makridis2023harnessing}, permanent variations in the network's composition (\emph{i.e.,} arrival or departure of nodes) \cite{2023:Galland_Hendrickx}, and the dynamic nature of the network topology \cite{RIKOS2022110621}. 
Attention has been also given to the case where the network consists of unreliable communication links which significantly enhances robustness \cite{4434917, 2009:Zampieri, hadjicostis2015robust, makridis2023harnessing}. 
% , whether nodes exchange real-valued or quantized-valued messages (due to bandwidth constraints),
%For the case of directed network topologies, most works focus on the case where nodes process and transit real-valued states \cite{2018:BOOK_Hadj, 2013:Themis_Hadj_Johansson, 2011:Morse_Yu, 8815134}. 

However, the assumption of infinite capacity is not true in practice, because digital channels are subject to bandwidth constraints and, hence, only a finite number of bits 
can be transmitted along a channel. 
%
%By reducing communication overhead and enhancing robustness,
Quantization is vital for resource-constrained wireless networks, e.g., a network consisting of Internet of Things (IoT) devices.
For improving operational efficiency, recently there is an increasing interest on works where nodes calculate the average of their states by exchanging quantized messages \cite{2011:Cai_Ishii, 2012:Lavaei_Murray, 2013:GarciaCasbeer, 2016:Chamie_Basar,  2020:Rikos_Quant_Cons, RIKOS2022110621, rikos2024finite}.
However, none of the aforementioned works considered packet drops. The only works that considered quantized average consensus for directed graphs and included switching topologies are 
\cite{SIAM:2013} and \cite{LI:2014}. However, in \cite{SIAM:2013} the network topology is assumed to be balanced, whereas in \cite{LI:2014}, although the network is directed, consensus is reached, but it does not necessarily correspond to the average. To the best of the authors' knowledge, the problem of quantized average consensus over directed network topologies in the presence of unreliable communication links, such as packet-dropping links, has not yet been satisfactorily addressed. 
This problem is of paramount significance as it paves the way for developing robust algorithms capable of operating effectively in realistic wireless communication environments. 
% \todo{We dont mention the error-free feedback channels and what novelty they introduce.}

\textbf{Main Contributions.} 
Motivated by the aforementioned gap in the literature, in this paper, we present the first distributed algorithm that enables nodes to achieve quantized average consensus over directed network topologies in the presence of unreliable communication links. This is achieved by invoking a Hybrid Automatic Repeat reQuest (HARQ) error control protocol, which  allows nodes to exploit incoming error-free acknowledgement feedback signals to know whether a packet has arrived, but also to achieve fast discrete-time asymptotic average consensus in the presence of packet retransmissions (information delays), and packet-dropping  links (information loss).
Our main contributions are the following: 
\begin{list4}
    \item We present a distributed algorithm that achieves quantized average consensus in finite time in the presence of unreliable packet links in the network; see Algorithm~\ref{alg:arq_quantized}. 
    \item Leveraging HARQ error-free feedback channels, our algorithm is able to maintain running sums that do not explode as time increases (similarly to~\cite{makridis2023harnessing} and unlike~\cite{hadjicostis2015robust}) and mitigates the impact of packet losses, since retransmissions decrease significantly the packet error probability; see equation~\eqref{eq:harq_packet_error_probability}.
    \item We show that our algorithm is able to converge to the average of the initial states in finite time despite the presence of network unreliabilities; see Proposition~\ref{proof_theorem}. 
    \item We validate the correctness of our algorithm via extended simulations over various scenarios; see Section~\ref{sec:results}.
\end{list4}

% \rem{The remainder of the paper is structured as follows. Section~\ref{sec:preliminaries} provides the mathematical notation and preliminaries requires for the developments of the results of this paper. Section~\ref{sec:probForm} outlines the problem formulation and Section~\ref{sec:main_results} describes the operation of our proposed algorithm. Section~\ref{sec:results} corroborates the performance of the proposed algorithm via numerical simulations. Finally, Section~\ref{sec:conclusions} summarizes the results of this paper and draws directions for future research.}

% ===============================================
%
%
% NOTATION
%
%
% ===============================================
\section{Notation and Preliminaries}\label{sec:preliminaries}

\textbf{Mathematical Notation.} 
The sets of real, rational, integer, and natural numbers are denoted as $\mathbb{R}, \mathbb{Q}, \mathbb{Z}$, and $\mathbb{N}$, respectively. 
The set of nonnegative integer numbers and the set of positive real numbers are denoted as $\mathbb{Z}_+, \mathbb{R}_{\geq 0}$. 
The nonnegative orthant of the $n$-dimensional real space $\mathbb{R}^n$ is denoted as $\mathbb{R}_{\geq 0}$. 
Matrices are denoted by capital letters (e.g.,  $A$), and vectors by small letters (e.g., $x$). 
The transpose of a matrix $A$ and a vector $x$ are denoted as $A^\top$, $x^\top$, respectively.
For any real number $a \in \mathbb{R}$, the greatest integer less than or equal to $a$ is denoted as $\lfloor a \rfloor$, and the least integer greater than or equal to $a$ is denoted as $\lceil a \rceil$. 
For a matrix $A \in \mathbb{R}^{n \times n}$, the entry in row $i$ and column $j$ of matrix  $A$ is denoted as $A_{ij} \in \mathbb{R}$.
The all-ones vector and the identity matrix of appropriate dimensions are denoted as $\mathbb{1}$ and $\mathbb{I}$, respectively. 
The Euclidean norm of a vector $x$ is represented by $\| x \|$. 

\textbf{Graph-Theoretic Notions.} 
The communication network is modeled as a directed graph (or digraph) denoted as $\mathcal{G}_d = (\mathcal{V}, \mathcal{E})$. 
It comprises $n$ nodes ($n \geq 2$), communicating solely with their immediate neighbors, and remains static over time. 
A directed edge from node $v_i$ to node $v_j$ is denoted by $\varepsilon_{ji} \in \mathcal{E}$, signifying that node $v_j$ can receive information from node $v_i$ at time step $k$ (but not vice versa). 
The node set is $\mathcal{V} = \{ v_1, v_2, ..., v_n \}$. 
The edge is $\mathcal{E} \subseteq \mathcal{V} \times \mathcal{V} \cup \{ \varepsilon_{jj} \ | \ v_i \in \mathcal{V} \}$ (each node has a virtual self-edge). 
The number of nodes is denoted as $| \mathcal{V} | = n$ and the number of edges as $| \mathcal{E} | = m$. 
The in-neighbors of $v_j$ are denoted by $\mathcal{N}_j^- = \{ v_i \in \mathcal{V} | \varepsilon_{ji} \in \mathcal{E}\}$, and they represent the nodes that can directly transmit information to $v_j$. 
The out-neighbors of $v_j$ are denoted by $\mathcal{N}_j^+ = \{ v_l \in \mathcal{V} | \varepsilon_{lj} \in \mathcal{E}\}$ and they represent the nodes that can directly receive information from $v_j$. 
The in-degree and out-degree of $v_j$ are denoted by $\mathcal{D}_j^- = | \mathcal{N}_j^- |$ and $\mathcal{D}_{j}^+ = | \mathcal{N}_{j}^+ |$, respectively. The diameter $D$ of a digraph is the longest shortest path between any two nodes $v_l, v_i \in \mathcal{V}$. 
A directed path from $v_i$ to $v_l$ of length $t$ exists if there's a sequence of nodes $i \equiv l_0, l_1, \dots, l_t \equiv l$ such that $\varepsilon_{l_{\tau+1} l_{\tau}} \in \mathcal{E}$ for $ \tau = 0, 1, \dots , t-1$. 
A digraph is \textit{strongly connected} if there's a directed path from every node $v_i$ to every node $v_l$ for all $v_i, v_l \in \mathcal{V}$. 

\textbf{Transmission Policy.} 
Each node possesses knowledge of its out-neighbors and can directly transmit messages to each of them. 
However, it may not receive messages directly from these out-neighbors due to the network's directed nature.
In our proposed distributed algorithm, each node $v_j$ assigns a \textit{unique order} in the set ${1,..., \mathcal{D}_j^+}$ to each of its outgoing edges $\varepsilon_{lj}$, where $v_l \in \mathcal{N}^+_j$. 
The order of link $ \varepsilon_{lj}$ for node $v_j$ is denoted by $P_{lj}$ (such that $\{P_{lj} \; | \; v_l \in \mathcal{N}^+_j\} = \{1,..., \mathcal{D}_j^+\}$). 
During the execution of the proposed distributed algorithm, this unique order facilitates each node $v_j$ in transmitting messages to its out-neighbors in a \textit{round-robin} manner adhering to the predefined order. 

\textbf{Modeling Packet Drops and Acknowledgements.} 
Each communication link in our distributed network is unreliable and is subject to packet drops. 
Packet drops can occur due to various reasons such as network congestion, channel impairments, or other factors. For a link $\varepsilon_{ji}$, the probability of a packet being dropped (or lost) at time step $k$ is denoted by $p_{ji}[k]$, where $0 \leq p_{ji}[k] < 1$. 
In our analysis, we assume that packet drops occur independently and are identically distributed across different links and time instances. We model the occurrence of packet drops as a Bernoulli random variable. Specifically, with $x_{ji}[k] = 1$ we denote a successful transmission from node $v_i$ to node $v_j$ at time step $k$, while with $x_{ji}[k] = 0$, we represent a failed transmission. The probability distribution of $x_{ji}[k]$ is given by
\begin{equation}\label{dropsmodel}
Pr\{ x_{ji}[k] = m \} = \begin{cases}
    p_{ji}[k], & \text{if~} m = 0,\\
    1 - p_{ji}[k], & \text{if~} m = 1.
\end{cases}
\end{equation}
The data receiver $v_j$ determines $x_{ji}[k]$ through an Automatic Repeat reQuest (ARQ) error control protocol. ARQ is used in various message transmission protocols (e.g., Transmission Control Protocol (TCP), High-Level Data Link, and others). %\cite{Krouk:2011, Garcia_Widjaja:2000}, 
ARQ relies on retransmitting previously failed transmissions with the aid of acknowledgement messages transmitted via a narrowband feedback channel from the data receiver to the data transmitter. Thus, data transmitter decides whether to retransmit data packets based on the acknowledgement message it received from the data receiver, which is given by
\begin{align}
    f_{ji}[k]=
    \begin{cases}
        0,& \text{if~} x_{ji}[k]=0,\\ %erroneous packet (\texttt{NACK})},\\
        1,& \text{otherwise}.% (\texttt{ACK})}.
    \end{cases}
\end{align}
The acknowledgement message is transmitted over a narrowband feedback channel, for which we make the following assumption.
\begin{assum}\label{assum_feedback_channel}
Each directed link $\varepsilon_{ji} \in \mathcal{E}$ from node $v_i$ to node $v_j$ is accompanied by a narrowband error-free feedback channel from node $v_j$ to node $v_i$.
\end{assum}  
These feedback channels cannot support data transmission as they usually comprise messages of $1$ bit only. For this reason, they are (assumed to be) error-free; see, e.g., \cite{JSAC:2008-errorfreechannels,2022:Makridis_Themis_Hadj}. 
%We denote the acknowledgement message received back to the data transmitter for a packet sent over the channel $\varepsilon_{ji}\in\mathcal{E}$ at time $k$ by:

\textbf{Decoding Error Reduction via Retransmissions.} 
In our paper, we adopt a communication protocol in which the decoding error reduces when a packet is retransmitted, such that the probability of packet drop reduces exponentially with the number of retransmissions \cite{HARQ:2010}. %HARQ:2003,
More specifically, consider that a fresh packet is initially transmitted at time $k$ over the link $\varepsilon_{ji}\in\mathcal{E}$. Then, the probability of error when retransmitting the same packet for $r\in\{1,\ldots,\bar{\tau}_{ji}\}$ times using the HARQ mechanism, can be approximated with the following packet error probability model \cite{HARQ:2019-Gunduz}: 
\begin{align}\label{eq:harq_packet_error_probability}
    p_{ji}[k+r]=p_{ji}[k] \lambda^{r},
\end{align}
where $\lambda \in (0,1]$ is a parameter that depends on the HARQ protocol used and the channel model, and $\bar{\tau}_{ji}$ is the number of allowable retransmissions. Clearly, the smaller the value of $\lambda$, the faster the error probability decreases with retransmissions. Thus, with a higher number of allowable retransmissions, the packet error probability becomes smaller. Setting $\lambda=1$, the HARQ protocol reduces to the ARQ.

% AAAAAAAAAAA 

% fill like: 
% https://arxiv.org/html/2402.18719v1

% ===============================================
%
%
% PROBLEM FORMULATION
%
%
% ===============================================
\section{Problem Formulation}\label{sec:probForm}

Let us consider a network of nodes (e.g., wireless sensors, robots, drones), communicating over a wireless channel modeled as a digraph $\mathcal{G}_d = (\mathcal{V}, \mathcal{E})$. 
Each node $v_j \in \mathcal{V}$ is assumed to have an initial state $y_j[0] \in \mathbb{Z}$\footnote{This assumption states that the measurements have undergone quantization already. 
This is a standard assumption in quantized consensus~\cite{2007:Basar, 2011:Cai_Ishii,2018:RikosHadj_CDC}.}. 
The communication links among nodes are not reliable due to the wireless nature of the channels (e.g., due to fading, and interference), and as a consequence packet drops may occur. 
Specifically, each communication link $\varepsilon_{ji}$ is unreliable and subject to packet drops (or packet losses) with probability $p_{ji}[k]$ at time step $k$. 
Additionally, in our network each directed link $\varepsilon_{ji}$ from node $v_i$ to node $v_j$ is accompanied by a narrowband error-free feedback channel from node $v_j$ to node $v_i$ (see Assumption~\ref{assum_feedback_channel}). 
In our paper, we aim at designing a distributed algorithm that enables nodes to calculate the average of their initial states, $\hat{y}\triangleq \frac{\sum_{v_i\in\mathcal{V}}y_i[0]}{n}$, in a finite number of steps, with quantized information exchange over packet-dropping directed links.

\section{Main Results}\label{sec:main_results}

In this section, we describe the operation of our proposed HARQ-based Quantized Average Consensus Algorithm, (herein abbreviated as HARQ-QAC algorithm). 
%In our algorithm nodes exchange quantized messages according to deterministic conditions. 
%This ensures they achieve quantized average consensus on their initial values. 
The proposed algorithm, unlike other works in the literature, manages potential packet reception failures by leveraging feedback from channels and integrating data from earlier packet transmissions. 
Our proposed algorithm is summarized in Algorithm~\ref{alg:arq_quantized} and described in the following four steps: 
\begin{list3}
\item[\textbf{1.}] Each node $v_j$ maintains an initial state $y_j[0]=V_j$ (it is determined by the information on which the network of nodes must reach average consensus). 
The initial value of the auxiliary variable $z_j[k]$ is set to $z_j[0] = 1$. 
Additionally, to track information that may be lost due to packet drops, each node maintains the accumulated $y_j$ and $z_j$ masses that node $v_j$ wants to transmit to each out-neighbor $v_l \in \mathcal{N}_j^{+}$. 
These masses are denoted by {$y_{lj}^b[k]$} and {$z_{lj}^b[k]$}, and initialized at $y_{lj}^b[0] = 0=z_{lj}^b[0] = 0$.
Similarly, each node maintains the state variables $y_j^s[k]$ and $z_j^s[k]$  which are initialized at $y_j^s[k]=y_j[0]$ and $z_j^s[k]=1$, respectively, and enable node $v_j$ to calculate the average of the initial states.
\item[\textbf{2.}] Each node $v_j$ assigns to each of its outgoing edges a unique order $P_{lj}$ to each $v_l\in {\mathcal{N}}_j^{+}$ from $1$ to $D_j^{+}$. 
Initially, it selects $v_l\in {\mathcal{N}}_j^{+}$ for which $P_{lj}=1$ and transmits $y_j[0]$ and $z_j[0]$ to this out-neighbor, by following the predetermined order. 
In subsequent transmissions to an out-neighbor, it resumes from the last used outgoing edge, cycling through the edges in a round-robin manner.
\item[\textbf{3.}] When a packet sent by an in-neighbor $v_i \in \mathcal{N}_j^{-}$ arrives at the receiving node $v_j$ with errors, the corresponding buffers $y_{ji}^b[k+1]$ and {$z_{ji}^b[k+1]$} are updated, whenever the actual packet retransmissions over link {$\varepsilon_{ji}$}, \emph{i.e.,} $\tau_{ij}[k]$, have not reached the maximum retransmission limit $\tau_{ij}[k] < \bar{\tau}_{ij}$ imposed by the HARQ protocol. 
In contrast, if the packet is error-free, these buffers are reset to zero. 
In case the retransmission limit of a packet has been reached, the running sum mechanism is activated (see lines 10--12 of Algorithm~\ref{alg:arq_quantized}).
\item[\textbf{4.}] Node $v_j$ receives the mass variables $y_i[k]$ and $z_i[k]$ from its in-neighbors $v_i$ $\in \mathcal{N}_{j}^{-}$ and sums them along with its stored mass variables $y_j[k]$ and $z_j[k]$ using the equations in lines 17, 18, in Algorithm~\ref{alg:arq_quantized}. 
It then compares the total accumulated mass against its current state variables $y_j^s[k]$ and $z_j^s[k]$ according to the event-trigger conditions \textbf{C1} and \textbf{C2}. 
More specifically, in \textbf{C1} node $v_j$ checks if its stored $z_j[k]$ mass variable is greater than its stored $z_j^s[k]$ state variable. 
Also, in \textbf{C2} it checks if $z_j[k]$ is equal to $z_j^s[k]$, and if this holds then it checks if its stored $y_j[k]$ mass variable is greater than its stored $y_j^s[k]$ state variable. 
If neither condition \textbf{C1} nor condition \textbf{C2} hold, it performs no actions. 
Otherwise, If at least one of \textbf{C1} and \textbf{C2} hold, $v_j$ updates its state variables to be equal to the stored mass variables.
Next, it selects the next out-neighbor $v_l$ $\in \mathcal{N}_{j}^{-}$ based on the predetermined priority, to transmit its values as in lines 25, 26, in Algorithm~\ref{alg:arq_quantized} (and then reset as in line 27).%, in a round-robin fashion, and it transmits the stored mass variables. 
%After this transmission, $v_j$ resets its mass variables to zero and repeats the process. 
\end{list3}

\begin{varalgorithm}{1}
\caption{HARQ-based Quantized Average Consensus (HARQ-QAC) algorithm for node $v_j$} 
\label{alg:arq_quantized}
\begin{algorithmic}[1]
\State \textbf{Input:} $y_j[0] \in \mathbb{Z}$.  
\State \textbf{Initialization:} Set $y_j^s[0] = y_j[0]$, $z_j[0] = 1$, $z_j^s[0] = 1$, $\mathrm{tr}_j=1$, and {$z_{lj}^b[0] = 0$}, {$y_{lj}^b[0] = 0$, $\tau_{lj}[0] = 0$, $f_{lj}[0] = 0$}, for all $v_l\in\mathcal{N}^+_j$. Assign a unique order $P_{lj}$ to each $v_l\in {\mathcal{N}}_j^{+}$ from 1 to $D_j^{+}$; then, select  $v_l\in {\mathcal{N}}_j^{+}$ for which $P_{lj}=1$ and transmit $y_j[0]$ and $z_j[0]$ to out-neighbor $v_l$.
%\State \textbf{Iteration:}
\For{$k = 0, 1, 2, \ldots$}
    \State \textbf{Receive feedback from} \( v_l \in \mathcal{N}_j^{{+}} \) based on $P_{lj}$
    \State \textbf{if}  $f_{lj}[k] = 0 \land \tau_{lj}[k] < \bar{\tau}_{lj}$
    \State \quad $y_{lj}^b[k+1] = y_j[k]$
    \State \quad $z_{lj}^b[k+1] = z_j[k]$
    \State \quad $y_j[k]=0$, $z_j[k]=0$, $\tau_{lj}[k+1] = \tau_{lj}[k] + 1$
    \State \textbf{else if} $f_{lj}[k] = 0 \land \tau_{lj}[k] = \bar{\tau}_{lj}$
    \State \quad $y_{j}[k+1] = y_{j}[k] + y_{lj}^b[k]$
    \State \quad $z_{j}[k+1] = z_{j}[k] + z_{lj}^b[k]$ 
    \State \quad $y_{lj}^b[k]  = 0, z_{lj}^b[k] = 0, \tau_{lj}[k] = 0$ 
    \State \textbf{else}
    %\State \quad $y_{j}[k+1] = y_{i}[k] + y_{j}[k] + y_{ji}^b[k]$ 
    %\State \quad $z_{i}[k+1] = z_{i}[k] + z_{j}[k] + z_{ji}^b[k]$ 
    \State \quad $y_{lj}^b[k] = 0, z_{lj}^b[k] = 0, \tau_{lj}[k] = 0$ 
    \State \textbf{end}
    \State \textbf{Receive information from} \( v_i \in \mathcal{N}_j^{{-}} \)
    \State \quad $y_{j}[k+1] = y_{j}[k+1] + \sum_{v_i\in\mathcal{N}_j^-} f_{ji}[k] y_i[k]$ 
    \State \quad $z_{j}[k+1] = z_{j}[k+1] + \sum_{v_i\in\mathcal{N}_j^-} f_{ji}[k] z_i[k]$ 
    \State \textbf{Send feedback to} \( v_i \in \mathcal{N}_j^{-} \): $f_{ji}[k]$
    \State \textbf{C1:} $z_j[k+1] > z_j^s[k]$
    \State \textbf{C2:} $z_j[k+1] = z_j^s[k]$ $\land$ $y_j[k+1]$ $\geq$ $y_j^s[k]$
    \State \textbf{if} \textbf{C1} $\lor$ \textbf{C2}
    \State \quad $y_j^s[k+1] = y_j[k+1], z_j^s[k+1] = z_j[k+1]$
    \State \quad  $\mathrm{tr}_j$ = $(\mathrm{tr}_j\mod D^{+}_j) + 1 $
    \State \quad\textbf{Transmit to} $v_l \in \mathcal{N}_j^{+}$ for which $P_{lj} = \mathrm{tr}_j$ \textbf{:} 
    \State \quad $z_j[k+1]$ and $y_j[k+1]$ then,
    \State \quad $y_j[k + 1] = 0$, $z_j[k + 1] = 0$
    \State \textbf{else}
    \State \quad $y_j^s[k+1] = y_j^s[k], z_j^s[k+1] = z_j^s[k]$
    \State \textbf{end}
    \State \textbf{Output:} $q_j^s[k + 1] = \frac{y_j^s[k + 1]}{z_j^s[k + 1]}$
\EndFor
\end{algorithmic}
\end{varalgorithm}

\begin{remark}
The event-trigger conditions \textbf{C1} and \textbf{C2} in Algorithm~\ref{alg:arq_quantized} enable finite time convergence to the exact quantized average of the initial states (as also shown in \cite{2020:Rikos_Quant_Cons}). 
However, the main difference of Algorithm~\ref{alg:arq_quantized} compared to \cite[Algorithm~$1$]{2020:Rikos_Quant_Cons} is its ability to handle packet drops.% (a assumption not present in \cite{2020:Rikos_Quant_Cons}). 
%This is achieved through the existence of the narrowband error-free feedback channels between nodes (see Assumption~\ref{assum_feedback_channel}). 
%Specifically, the feedback channels enable nodes to acknowledge whether a packet transmission between nodes was successful. 
%This characteristic, combined with packet retransmissions, enable nodes to calculate the quantized average in the presence of packet dropping links. 
\end{remark}

\begin{example}\label{example:1}
We provide an example to demonstrate the operation of Algorithm 1.
Consider the strongly connected digraph $\mathcal{G}_d = (\mathcal{V}, \mathcal{E})$ in Fig.~\ref{fig:enter-label}, where $\mathcal{V} = \{v_1, v_2, v_3, v_4\}$ and $\mathcal{E} = \{ \varepsilon_{12}, \varepsilon_{23}, \varepsilon_{31}, \varepsilon_{32}, \varepsilon_{34}, \varepsilon_{41}\}$. Each node is initialized with an integer value such that $y_1[0] = 2$, $y_2[0] = 6$, $y_3[0] = 2$, and $y_4[0] = 6$, respectively. Hence, the average consensus value is equal to $\hat{q} = 4$.

\begin{figure}[h]
    \centering
    \includegraphics[width=0.8\linewidth]{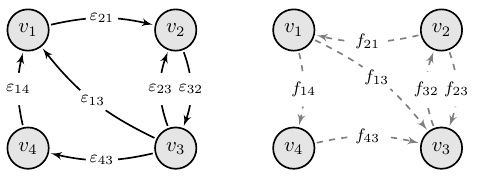}
    \caption{The digraph considered in Example 1 (left), along with its associated feedback links (right).}
    \label{fig:enter-label}
\end{figure}

In this example, we assume that each node $v_j \in \mathcal{V}$ incorporates an HARQ mechanism with maximum retransmission limit {$\bar{\tau}_{ij}=\bar{\tau}=2$} for all $v_j \in \mathcal{N}^{+}_{i}$. Moreover, as already discussed in previous sections, the packet error probability reduces exponentially with the number of retransmissions as stated in \eqref{eq:harq_packet_error_probability} where we assume that for a packet initially transmitted over the link $\varepsilon_{ji}$ at time $k$, the initial packet error probability is $p_{ji}=p=0.6$ for all $\varepsilon_{ji} \in \mathcal{E}$. 

Now, let us delve into the operation of the HARQ-QAC algorithm based on the setting we just introduced, with the aid of Tables~\ref{tab:k_0}-\ref{tab:k_8}\footnote{Note that we do not require agents to maintain buffers for links that are not available in the digraph of Example~\ref{example:1}. For this reason, we denote by $\bullet$ the absence of such buffers.}.
% ======================== TIME STEP k = 0
At time step $k=0$ each node $v_j$ transmits its initial mass variables $y_j[0]$ and $z_j[0]$ according to digraph $\mathcal{G}_d$ and the event-triggered conditions of the Algorithm~\ref{alg:arq_quantized}. By the end of time step $k=0$, the nodes receive back the corresponding HARQ feedback of the packets they transmitted. During this iteration, only the packet sent from node $v_2$ to node $v_1$ has been successfully received. Thus, nodes $v_1$, $v_3$, and $v_4$ store the packets that arrive in error in their corresponding buffers to be retransmitted in the next iteration, as shown in Table~\ref{tab:k_1}.

% \begin{table}[h!]
% \centering
% \begin{tabular}{|C{0.5cm}||C{0.5cm}|C{0.5cm}|C{0.5cm}|C{0.5cm}|C{1.2cm}|C{1.2cm}|}
% \hline
% %Node & \multicolumn{6}{c|}{Mass, State and Buffer Variables for \( k = 0 \)} \\ \hline\hline
% Node & \( y_j[0] \) & \( z_j[0] \) & \( y_j^{s}[0] \) & \( z_j^{s}[0] \) & \( y_{ji}^{b}[0] \) & \(  z_{ji}^{b}[0]\) \\ \hline\hline
% $v_1$ & $2$ & $1$ & $2$ & $1$ & $[\bullet~0~\bullet~\bullet]$ & $[\bullet~0~\bullet~\bullet]$  \\ \hline
% $v_2$ & $6$ & $1$ & $6$ & $1$ & $[\bullet~\bullet~0~\bullet]$ & $[\bullet~\bullet~0~\bullet]$  \\ \hline
% $v_3$ & $2$ & $1$ & $2$ & $1$ & $[0~0~\bullet~0]$ & $[0~0~\bullet~0]$  \\ \hline
% $v_4$ & $6$ & $1$ & $6$ & $1$ & $[0~\bullet~\bullet~\bullet]$ & $[0~\bullet~\bullet~\bullet]$  \\ \hline
% \end{tabular}
% \vspace{6pt}
% \caption{$k=0$}
% \label{tab:k_0}
% \end{table}

\begin{table}[h!]
\centering
\begin{footnotesize}
\begin{tabular}{|C{1.33cm}" C{0.5cm}|C{0.5cm}|C{0.5cm}|C{0.5cm}|C{1.2cm}|C{1.2cm}|}
\hline
\textbf{Node} & \( y_j[0] \) & \( z_j[0] \) & \( y_j^{s}[0] \) & \( z_j^{s}[0] \) & \( y_{ji}^{b}[0] \) & \(  z_{ji}^{b}[0]\) \\ \thickhline
\rowcolor[HTML]{cccccc} $v_1$ & $2$ & $1$ & $2$ & $1$ & $[\bullet~0~\bullet~\bullet]$ & $[\bullet~0~\bullet~\bullet]$  \\ %\hline
\rowcolor[HTML]{eeeeee} $v_2$ & $6$ & $1$ & $6$ & $1$ & $[\bullet~\bullet~0~\bullet]$ & $[\bullet~\bullet~0~\bullet]$  \\ %\hline
\rowcolor[HTML]{cccccc} $v_3$ & $2$ & $1$ & $2$ & $1$ & $[0~0~\bullet~0]$ & $[0~0~\bullet~0]$  \\ %\hline
\rowcolor[HTML]{eeeeee} $v_4$ & $6$ & $1$ & $6$ & $1$ & $[0~\bullet~\bullet~\bullet]$ & $[0~\bullet~\bullet~\bullet]$  \\ \hline
\end{tabular}
\end{footnotesize}
\vspace{6pt}
\caption{Operation of the HARQ-QAC algorithm at $k=0$.}
\label{tab:k_0}
\vspace{-10pt}
\end{table}

By time step $k=1$, node $v_3$ receives the mass variables from $v_1$, and hence it will transmit its new updated mass to its next out-neighbor $v_2$ (according to the predetermined order), as well as the mass stored in its buffers (due to the packet error at $k=0$) to node $v_1$. Nodes $v_1$, $v_2$, and $v_4$, however, do not transmit any new packet since they did not receive any information in the previous iteration. Yet, they retransmit their corresponding packets held in their buffers (if any) to their destined out-neighbors. The updated states of the nodes based on the transmissions and retransmissions during iteration $k=1$ are shown in Table~\ref{tab:k_2}.

% \begin{table}[h!]
% \centering
% \begin{tabular}{|C{0.5cm}||C{0.5cm}|C{0.5cm}|C{0.5cm}|C{0.5cm}|C{1.2cm}|C{1.2cm}|}
% \hline
% Node & \( y_j[1] \) & \( z_j[1] \) & \( y_j^{s}[1] \) & \( z_j^{s}[1] \) & \( y_{ji}^{b}[1] \) & \(  z_{ji}^{b}[1]\) \\ \hline\hline
% $v_1$ & $0$ & $0$ & $2$ & $1$ & $[\bullet~2~\bullet~\bullet]$ & $[\bullet~1~\bullet~\bullet]$  \\ \hline
% $v_2$ & $0$ & $0$ & $6$ & $1$ & $[\bullet~\bullet~0~\bullet]$ & $[\bullet~\bullet~0~\bullet]$  \\ \hline
% $v_3$ & $6$ & $1$ & $6$ & $1$ & $[2~0~\bullet~0]$ & $[1~0~\bullet~0]$  \\ \hline
% $v_4$ & $0$ & $0$ & $6$ & $1$ & $[6~\bullet~\bullet~\bullet]$ & $[1~\bullet~\bullet~\bullet]$  \\ \hline
% \end{tabular}
% \vspace{6pt}
% \caption{$k=1$}
% \label{tab:k_1}
% \vspace{-10pt}
% \end{table}

% \begin{table}[h!]
% \centering
% \begin{tabular}{|C{0.5cm}||C{0.5cm}|C{0.5cm}|C{0.5cm}|C{0.5cm}|C{1.2cm}|C{1.2cm}|}
% \hline
% Node & \( y_j[2] \) & \( z_j[2] \) & \( y_j^{s}[2] \) & \( z_j^{s}[2] \) & \( y_{ji}^{b}[2] \) & \(  z_{ji}^{b}[2]\) \\ \hline\hline
% $v_1$ & $8$ & $2$ & $8$ & $2$ & $[\bullet~2~\bullet~\bullet]$ & $[\bullet~1~\bullet~\bullet]$  \\ \hline
% $v_2$ & $0$ & $0$ & $6$ & $1$ & $[\bullet~\bullet~0~\bullet]$ & $[\bullet~\bullet~0~\bullet]$  \\ \hline
% $v_3$ & $0$ & $0$ & $6$ & $1$ & $[0~6~\bullet~0]$ & $[0~1~\bullet~0]$  \\ \hline
% $v_4$ & $0$ & $0$ & $6$ & $1$ & $[0~\bullet~\bullet~\bullet]$ & $[0~\bullet~\bullet~\bullet]$  \\ \hline
% \end{tabular}
% \vspace{6pt}
% \caption{$k=2$}
% \label{tab:k_2}
% \end{table}

\begin{table}[h!]
\centering
\begin{footnotesize}
\begin{tabular}{|C{1.33cm}" C{0.5cm}|C{0.5cm}|C{0.5cm}|C{0.5cm}|C{1.2cm}|C{1.2cm}|}
\hline
\textbf{Node} & \( y_j[1] \) & \( z_j[1] \) & \( y_j^{s}[1] \) & \( z_j^{s}[1] \) & \( y_{ji}^{b}[1] \) & \(  z_{ji}^{b}[1]\) \\ \thickhline
\rowcolor[HTML]{cccccc} $v_1$ & $0$ & $0$ & $2$ & $1$ & $[\bullet~2~\bullet~\bullet]$ & $[\bullet~1~\bullet~\bullet]$  \\ %\hline
\rowcolor[HTML]{eeeeee} $v_2$ & $0$ & $0$ & $6$ & $1$ & $[\bullet~\bullet~0~\bullet]$ & $[\bullet~\bullet~0~\bullet]$  \\ %\hline
\rowcolor[HTML]{cccccc} $v_3$ & $6$ & $1$ & $6$ & $1$ & $[2~0~\bullet~0]$ & $[1~0~\bullet~0]$  \\ %\hline
\rowcolor[HTML]{eeeeee} $v_4$ & $0$ & $0$ & $6$ & $1$ & $[6~\bullet~\bullet~\bullet]$ & $[1~\bullet~\bullet~\bullet]$  \\ \hline
\end{tabular}
\end{footnotesize}
\vspace{6pt}
\caption{Operation of the HARQ-QAC algorithm at $k=1$.}
\label{tab:k_1}
\vspace{-10pt}
\end{table}

\begin{table}[h!]
\centering
\begin{footnotesize}
\begin{tabular}{|C{1.33cm}" C{0.5cm}|C{0.5cm}|C{0.5cm}|C{0.5cm}|C{1.2cm}|C{1.2cm}|}
\hline
\textbf{Node} & \( y_j[2] \) & \( z_j[2] \) & \( y_j^{s}[2] \) & \( z_j^{s}[2] \) & \( y_{ji}^{b}[2] \) & \(  z_{ji}^{b}[2]\) \\ \thickhline
\rowcolor[HTML]{cccccc} $v_1$ & $8$ & $2$ & $8$ & $2$ & $[\bullet~2~\bullet~\bullet]$ & $[\bullet~1~\bullet~\bullet]$  \\ %\hline
\rowcolor[HTML]{eeeeee} $v_2$ & $0$ & $0$ & $6$ & $1$ & $[\bullet~\bullet~0~\bullet]$ & $[\bullet~\bullet~0~\bullet]$  \\ %\hline
\rowcolor[HTML]{cccccc} $v_3$ & $0$ & $0$ & $6$ & $1$ & $[0~6~\bullet~0]$ & $[0~1~\bullet~0]$  \\ %\hline
\rowcolor[HTML]{eeeeee} $v_4$ & $0$ & $0$ & $6$ & $1$ & $[0~\bullet~\bullet~\bullet]$ & $[0~\bullet~\bullet~\bullet]$  \\ \hline
\end{tabular}
\end{footnotesize}
\vspace{6pt}
\caption{Operation of the HARQ-QAC algorithm at $k=2$.}
\label{tab:k_2}
\vspace{-10pt}
\end{table}

A similar procedure is followed for the next iterations, which we omit for ease of presentation. However, to emphasize the convergence of the HARQ-QAC algorithm, we present the last two iterations (in Tables~\ref{tab:k_7}-\ref{tab:k_8} by which the nodes converge to the exact average consensus value. In particular, at iteration $k=7$ all nodes $v_j \in \mathcal{V}$ converged already to the average consensus value since $q^s_j[7]=y^s_j[7]/z^s_j[7]=4$. However, due to the triggering conditions of the HARQ-QAC, node $v_4$ should transmit its mass variables to node $v_1$ which are successfully received at time $k=8$. Consequently, since no further triggering conditions hold for the next iterations and the buffers of the individual nodes are empty, the algorithm terminates.

\begin{table}[h!]
\centering
\begin{footnotesize}
\begin{tabular}{|C{1.33cm}" C{0.5cm}|C{0.5cm}|C{0.5cm}|C{0.5cm}|C{1.2cm}|C{1.2cm}|}
\hline
\textbf{Node} & \( y_j[7] \) & \( z_j[7] \) & \( y_j^{s}[7] \) & \( z_j^{s}[7] \) & \( y_{ji}^{b}[7] \) & \(  z_{ji}^{b}[7]\) \\ \thickhline
\rowcolor[HTML]{cccccc} $v_1$ & $0$ & $0$ & $8$ & $2$ & $[\bullet~0~\bullet~\bullet]$ & $[\bullet~0~\bullet~\bullet]$  \\ %\hline
\rowcolor[HTML]{eeeeee} $v_2$ & $0$ & $0$ & $16$ & $4$ & $[\bullet~\bullet~0~\bullet]$ & $[\bullet~\bullet~0~\bullet]$  \\ %\hline
\rowcolor[HTML]{cccccc} $v_3$ & $0$ & $0$ & $16$ & $4$ & $[0~0~\bullet~0]$ & $[0~0~\bullet~0]$  \\ %\hline
\rowcolor[HTML]{eeeeee} $v_4$ & $16$ & $4$ & $16$ & $4$ & $[0~\bullet~\bullet~\bullet]$ & $[0~\bullet~\bullet~\bullet]$  \\ \hline
\end{tabular}
\end{footnotesize}
\vspace{0pt}
\caption{Operation of the HARQ-QAC algorithm at $k=7$.}
\label{tab:k_7}
\vspace{-5pt}
\end{table}

\begin{table}[h!]
\centering
\begin{footnotesize}
\begin{tabular}{|C{1.33cm}" C{0.5cm}|C{0.5cm}|C{0.5cm}|C{0.5cm}|C{1.2cm}|C{1.2cm}|}
\hline
\textbf{Node} & \( y_j[8] \) & \( z_j[8] \) & \( y_j^{s}[8] \) & \( z_j^{s}[8] \) & \( y_{ji}^{b}[8] \) & \(  z_{ji}^{b}[8]\) \\ \thickhline
\rowcolor[HTML]{cccccc} $v_1$ & $0$ & $0$ & $16$ & $4$ & $[\bullet~0~\bullet~\bullet]$ & $[\bullet~0~\bullet~\bullet]$  \\ %\hline
\rowcolor[HTML]{eeeeee} $v_2$ & $0$ & $0$ & $16$ & $4$ & $[\bullet~\bullet~0~\bullet]$ & $[\bullet~\bullet~0~\bullet]$  \\ %\hline
\rowcolor[HTML]{cccccc} $v_3$ & $0$ & $0$ & $16$ & $4$ & $[0~0~\bullet~0]$ & $[0~0~\bullet~0]$  \\ %\hline
\rowcolor[HTML]{eeeeee} $v_4$ & $0$ & $0$ & $16$ & $4$ & $[0~\bullet~\bullet~\bullet]$ & $[0~\bullet~\bullet~\bullet]$  \\ \hline
\end{tabular}
\end{footnotesize}
\vspace{0pt}
\caption{Operation of the HARQ-QAC algorithm at $k=8$.}
\label{tab:k_8}
\vspace{-10pt}
\end{table}

\end{example}

\begin{prop}\label{proof_theorem}
Consider a strongly connected digraph $\mathcal{G}_d =(\mathcal{V}, \mathcal{E})$ with $n=|\mathcal{V}|$ nodes and $m=|\mathcal{E}|$ edges. Suppose that each node $v_j\in\mathcal{V}$ executes Algorithm~\ref{alg:arq_quantized}. Then, we can find a finite number of steps $k_1\in\mathbb{N}$, such that for $k\geq k_1$ each node $v_j\in\mathcal{V}$ reaches
\begin{align*}
    y^s_j[k] = \sum_{\ell\in\mathcal{V}}y_{\ell}[0] \quad \text{and} \quad z^s_j[k]=n,
\end{align*}
almost surely (\emph{i.e.,} with probability 1). % This means that after a finite number of steps, each node $v_j\in\mathcal{V}$ has computed the average of all initial states as the ratio, $q_j^s$ of two integer values. %, i.e.,
%\begin{align*}
%    q^s_j[k]=\frac{\sum_{\ell\in\mathcal{V}}y_{\ell}[0]}{n}, \quad k\geq k_0.
%\end{align*}
\end{prop}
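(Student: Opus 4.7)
The plan is to prove the proposition in three steps: (i) a deterministic mass-conservation invariant showing that the aggregate $y$- and $z$-masses in the network are preserved by every operation of Algorithm~\ref{alg:arq_quantized}; (ii) a stochastic argument showing that every piece of mass is delivered in finitely many time steps almost surely; and (iii) a reduction to the deterministic finite-time convergence analysis of the drop-free algorithm of~\cite{2020:Rikos_Quant_Cons}, carried out along the sub-sequence of successful transmissions.

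For step (i), I would define the aggregates $Y[k] = \sum_{v_j \in \mathcal{V}} y_j[k] + \sum_{\varepsilon_{lj}\in\mathcal{E}} y_{lj}^b[k]$ (together with any in-flight mass) and analogously $Z[k]$, and then inspect lines 5--19 of Algorithm~\ref{alg:arq_quantized} one operation at a time: buffering on a failed HARQ attempt (lines 5--8), running-sum re-absorption after $\bar{\tau}_{lj}$ consecutive failures (lines 9--12), successful reception with $f_{ji}[k]=1$ (lines 17--18), and triggered re-emission with the subsequent reset of $y_j,z_j$ to zero (lines 23--27). Each such primitive merely relocates mass, so $Y[k]=\sum_\ell y_\ell[0]$ and $Z[k]=n$ for all $k$. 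For step (ii), a scheduled transmission on $\varepsilon_{lj}$ concludes after at most $\bar{\tau}_{lj}+1$ HARQ attempts, and any terminal failure triggers a running-sum re-injection that re-schedules the mass at the next trigger of $v_j$. Since each attempt succeeds with probability at least $1-p_{lj}[k]>0$, and the success probability strictly improves under HARQ via~\eqref{eq:harq_packet_error_probability}, a Borel--Cantelli (or geometric-trials) argument implies that each piece of mass is delivered after finitely many re-scheduling cycles almost surely.

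For step (iii), I would introduce an effective time index $\kappa$ advancing only on concluded transmissions, and observe that the state variables $(y_j^s,z_j^s)$ are updated exclusively on successful deliveries satisfying \textbf{C1} or \textbf{C2}; buffer and running-sum operations never touch them. Thus, along the sub-sequence of successful transmissions, the evolution of $(y_j^s,z_j^s)$ coincides with that of the drop-free algorithm in~\cite{2020:Rikos_Quant_Cons}, and the finite-time convergence result there applies: the conditions \textbf{C1}, \textbf{C2}, combined with strong connectivity and the round-robin order $P_{lj}$, force $z_j^s$ to increase monotonically in integer steps bounded above by $n$, so each node performs only finitely many state updates and eventually a single ``full-mass'' packet with $z=n$ and $y=\sum_\ell y_\ell[0]$ circulates through the digraph and overwrites every $v_j$'s state. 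Combined with step (ii), the corresponding wall-clock time $k_1$ is finite almost surely. The main technical obstacle is justifying the reduction in step (iii) cleanly: one must verify that the interleaved ``wasted'' time steps caused by failed HARQ attempts and buffer re-injections do not disturb the monotonicity of $z_j^s$ that underlies the convergence proof of~\cite{2020:Rikos_Quant_Cons}. This is resolved by noting that between consecutive triggers at a fixed node the intervening buffer and running-sum operations only additively modify $y_j,z_j$ without affecting $y_j^s,z_j^s$, so the monotone selection rule of the base algorithm is preserved intact under the stochastic dynamics.
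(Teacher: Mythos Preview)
Your three-step plan---mass conservation, almost-sure eventual delivery via a Borel--Cantelli/geometric-trials argument, and reduction to the drop-free finite-time analysis---is precisely the adaptation the paper has in mind when it says the proof follows \emph{mutatis mutandis} from~\cite[Proposition~2]{2018:RikosHadj_CDC} (note the paper anchors the reduction to that reference rather than to~\cite{2020:Rikos_Quant_Cons}, though the two are closely related). The only overstatement is that the $(y_j^s,z_j^s)$ evolution does not literally \emph{coincide} with the drop-free algorithm along the sub-sequence of successful transmissions---sample paths differ because of buffering and re-absorption---but what actually carries the argument, and what you correctly isolate, is that monotonicity of $z_j^s$ bounded above by $n$ and mass conservation survive these perturbations intact.
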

\begin{proof}
%See Appendix~\ref{sec:Appendix}.
The proof follows a similar structure \emph{mutatis mutandis} to that of~\cite[Proposition~2]{2018:RikosHadj_CDC}, with the main difference that in this proof we also incorporate the effect of the packet losses and the use of the HARQ mechanism.
\end{proof}

% ===============================================
%
%
% SIMULATIONS
%
%
% ===============================================
\section{Simulation Results}\label{sec:results}

To validate the performance of the proposed algorithm, we run numerical simulations based on the setting considered in Example~\ref{example:1}. In what follows, however, we further consider the following three scenarios. 
\begin{list4}
\item \textbf{SC1:} Ideal channel conditions in which no packet errors are assumed, \emph{i.e.,} $p_{ji}=0$ for all $\varepsilon_{ji}\in\mathcal{E}$, which reduces to the algorithm in \cite{2018:RikosHadj_CDC}. 
\item \textbf{SC2:} Unreliable channels with packet error probability $p_{ji}=0.6$ for all $\varepsilon_{ji}\in\mathcal{E}$, using HARQ-QAC with $\lambda=1$ and $\bar{\tau}=2$, which reduces to an ARQ variant of the proposed algorithm.
\item \textbf{SC3:} Unreliable channels with initial packet error probability $p_{ji}=0.6$ for all $\varepsilon_{ji}\in\mathcal{E}$, using HARQ-QAC with $\lambda=0.3$ and $\bar{\tau}=2$.    
\end{list4}

\begin{figure}[h]
    \centering
    \includegraphics[width=\linewidth]{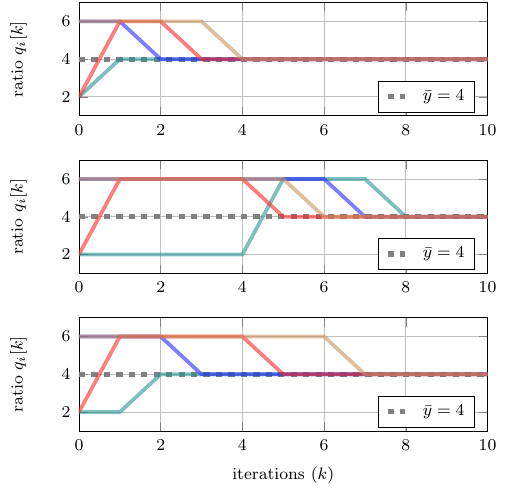}
    \caption{Ratio of $v_i\in\mathcal{V}$. Upper: SC1, middle: SC2, and lower: SC3.}
    \label{fig:results_ratios}
\end{figure}

In Fig.~\ref{fig:results_ratios}, we illustrate the ratio for all $v_i\in\mathcal{V}$, over 10 iterations under the three different scenarios, \emph{i.e.,}  SC1 (upper plot), SC2 (middle plot), and SC3 (lower plot). % the ideal case, \emph{i.e.,} SC1, in the middle plot we present SC2 with the ARQ variant of HARQ-QAC ($\lambda=$%in which there are no packet errors affecting the transmission process as proposed by \cite{rikos2018distributed}. The middle plot presents the scenario SC2 with packet error probability $p=0.6$ and $\lambda=1$, while the lower plot demonstrates the scenario SC3 under the same packet error probability, \emph{i.e.,} $p=0.6$, with $\lambda=0.3$. 
Clearly, in all the scenarios, agents converge to the exact average consensus value, albeit the packet errors in SC2 and SC3. However, as also shown in Fig.~\ref{fig:results_errors}, in HARQ-QAC with $\lambda=0.3$, \emph{i.e.,} SC3, achieves faster convergence to the average of the nodes initial values compared to SC2 where $\lambda=1$. This behavior is expected since receiving nodes exploit information from previous retransmission trials to reduce the probability of decoding errors at each iteration. Notice also that, the average consensus error for each scenario decays exactly to $0$ due to the quantized values exchanged over the network (assuming that they are initialized at integer values).

In Fig.~\ref{fig:results_errors}, we demonstrate the average consensus error over time, for the three scenarios (SC1, SC2, and SC3). SC1, denoted by black, shows a quick decrease in error, which suggests that this scenario quickly moves toward consensus. After about 4 iterations, the error drops to 0. For SC2, indicated by red, the error reduction is more gradual, it reaches stability in error after the 8th iteration. Lastly, in SC3, marked by cyan, the error decreases and achieves stability in error after the 7th iteration, reflecting a faster approach to consensus compared to SC2.

\begin{figure}[h]
    \centering
    \includegraphics{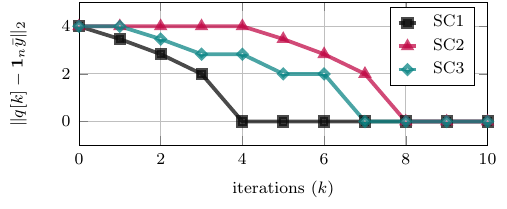}
    \caption{Average consensus error for the three considered scenarios.}
    \label{fig:results_errors}
\end{figure}

% ===============================================
%
%
% NEW - Neofytos
%
%
% ===============================================
%\newpage
Here, we evaluate the number of iterations required for convergence in a 20 node network across various error rates and $\lambda$ values with $\bar{\tau}=3$, in terms of the number of iterations. 
As shown in TABLE~\ref{tab:k_8}, HARQ-QAC with $\lambda<1$ outperforms the HARQ-QAC with $\lambda=1$, especially at a higher error rate.

\begin{table}[h!]
\centering
\begin{footnotesize}
\begin{tabular}
{|C{1.33cm}" C{.87cm}|C{.965cm}|C{.965cm}|C{.965cm}|C{.87cm}|}
\hline
\textbf{Error Rate} & \multicolumn{5}{c|}{\bf Iterations} \\
%\cline{2-6}
 (\%) & $\lambda=0$ & $\lambda=0.1$ & $\lambda=0.5$ & $\lambda=0.8$ & $\lambda=1$ \\ \thickhline
\rowcolor[HTML]{cccccc} 0  & 92  & 92  & 92  & 92  & 92  \\ %\hline
\rowcolor[HTML]{eeeeee} 20 & 106 & 159 & 159 & 170 & 179 \\ %\hline
\rowcolor[HTML]{cccccc}40 & 138 & 138 & 208 & 185 & 189 \\ %\hline
\rowcolor[HTML]{eeeeee}60 & 144 & 153 & 150 & 204 & 213 \\ %\hline
\rowcolor[HTML]{cccccc}80 & 164 & 171 & 195 & 299 & 516 \\ \hline
\end{tabular}
\end{footnotesize}
\vspace{6pt}
\caption{Convergence with varying error rates and $\lambda$ values.}
\label{tab:k_8}
\vspace{-10pt}
\end{table}

Figure 4 shows the relationship between different $\lambda$ values and the number of iterations required for convergence at 60\% and 80\% error rate. As $\lambda$ increases, the number of iterations rises significantly, highlighting the impact of $\lambda$ on the convergence time of the algorithm.

% \begin{figure}[h!]
% \centering
% \begin{tikzpicture}
% \begin{axis}[
%     %xlabel={Number of Iterations to Converge},
%     xlabel={iterations (k)},
%     ylabel={$\lambda$ values},
%     %title={Convergence for Error Rate 80\%},
%     grid=both,
%     xmin=150, xmax=550,
%     ymin=0, ymax=1,
%     xtick={150, 200, 300, 400, 500},
%     ytick={0, 0.1, 0.5, 0.8, 1},
%     legend pos=outer north east,
%     scatter/classes={%
%       a={mark=*, color={rgb,255:red,87;green,172;blue,172}},%
%       b={mark=*, color={rgb,255:red,87;green,172;blue,172}}},
%     scatter,
%     point meta=explicit symbolic
% ]
% \addplot[line width=1.5pt, mark=o, color={rgb,255:red,87;green,172;blue,172}] 
% coordinates {
%     (164,0) [X=164, Y=0]
%     (171,0.1) [X=171, Y=0.1]
%     (195,0.5) [X=195, Y=0.5]
%     (299,0.8) [X=299, Y=0.8]
%     (516,1) [X=516, Y=1]
% };
% \end{axis}
% \end{tikzpicture}
% \caption{Convergence for Error Rate 80\%}
% \end{figure}

\begin{figure}[h!]
\centering
    \includegraphics{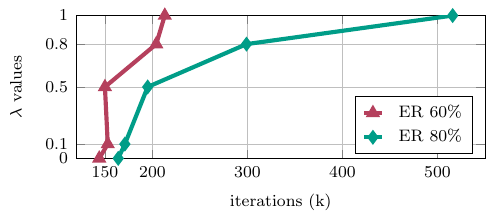}
    \vspace{-5pt}
    \caption{Convergence for error rates 60\% and 80\%}
\end{figure}

% ===============================================
%
%
% CONCLUSIONS
%
%
% ===============================================
\section{Conclusions and Future Directions}\label{sec:conclusions}

%\subsection{Conclusions}

In this paper, we considered the problem of average consensus in a digraph when the nodes exchange quantized packets in the presence of packet losses. We proposed a distributed algorithm based on that of \cite{2018:RikosHadj_CDC} and incorporated a HARQ mechanism with which transmitting nodes can a) receive ACKs/NACKs about the reception of the packet transmitted, and b) decrease the probability of packet loss when a packet is retransmitted, thus accelerating the convergence to the average. The way the algorithm works was demonstrated via a simple example and simulations. It was shown that our proposed algorithm converges to the average of the initial quantized values in a finite number of steps, despite the packet losses. %To the best of the authors' knowledge, it is the first quantized consensus algorithm in the literature that can
handle packet losses and converge to the average.

%\subsection{Future Directions}

This work brings forward several challenges for deploying communication protocols in distributed settings. As an example, it is of high interest to investigate how packet losses and the allowable number of retransmissions affect the convergence of the algorithm, under diverse network conditions (e.g., different sizes, different diameters, etc).
\bibliographystyle{IEEEtran}
\bibliography{references}

% ===============================================
%
%
% End of document!
%
%
% ===============================================
\end{document}